\DeclarePairedDelimiter\ceil{\lceil}{\rceil}
\theoremstyle{plain}
\newtheorem{theorem}{Theorem}
\newtheorem{proposition}{Proposition}
\newtheorem{lemma}{Lemma}
\newtheorem{conjecture}{Conjecture}
\newtheorem{corollary}{Corollary}
\newtheorem{construction}{Construction}
\theoremstyle{definition}
\newtheorem{definition}{Definition}
\newtheorem{example}{Example}
\newtheorem{remark}{Remark}
\newcommand{\C}{{\mathcal C}}
\newcommand{\D}{{\mathcal D}}
\DeclareMathAlphabet{\mathbfsl}{OT1}{ppl}{b}{it} 
\newcommand{\bL}{\mathbfsl{L}}
\newcommand{\bX}{\mathbfsl{X}}
\newcommand{\bU}{\mathbfsl{U}}
\newcommand{\bp}{{\mathbfsl p}}
\newcommand{\bq}{{\mathbfsl q}}
\newcommand{\bu}{{\mathbfsl u}}
\newcommand{\bv}{{\mathbfsl v}}
\newcommand{\bs}{{\mathbfsl s}}
\newcommand{\by}{{\mathbfsl y}}
\newcommand{\bc}{{\mathbfsl c}}
\newcommand{\bx}{{\mathbfsl{x}}}
\newcommand{\bz}{{\mathbfsl{z}}}
\newcommand{\bsg}{{\boldsymbol{\sigma}}}
\renewcommand{\ge}{\geqslant}
\renewcommand{\le}{\leqslant}
\newcommand{\et}{{\emph{et al.}}}
\newcommand{\enc}{\textsc{Enc}}
\newcommand{\dec}{\textsc{Dec}}
\begin{document}

\pagestyle{empty}

\title{On the Design of Codes for DNA Computing: Secondary Structure Avoidance Codes\\[-3mm]}


\author{\IEEEauthorblockN{Tuan Thanh Nguyen\IEEEauthorrefmark{1},
Kui Cai\IEEEauthorrefmark{1},
Han Mao Kiah\IEEEauthorrefmark{2},
Duc Tu Dao\IEEEauthorrefmark{2},  
and Kees A. Schouhamer Immink\IEEEauthorrefmark{3}}\\[-3mm]
\IEEEauthorblockA{
\IEEEauthorrefmark{1}
Science, Mathematics and Technology Cluster, Singapore University of Technology and Design, Singapore 487372\\
\IEEEauthorrefmark{2}%
School of Physical and Mathematical Sciences, Nanyang Technological University, Singapore 637371\\
\IEEEauthorrefmark{3}%
Turing Machines Inc, Willemskade 15d, 3016 DK Rotterdam, The Netherlands\\
Emails: \{tuanthanh\_nguyen, cai\_kui\}@sutd.edu.sg, \{hmkiah,daoductu001\}@ntu.edu.sg, immink@turing-machines.com\\[-4mm]
}
}
\maketitle
\maketitle

\hspace{-3mm}\begin{abstract}

In this work, we investigate a challenging problem, which has been considered to be an important criterion in designing codewords for DNA computing purposes, namely {\em secondary structure avoidance} in single-stranded DNA molecules. In short, secondary structure refers to the tendency of a single-stranded DNA sequence to fold back upon itself, thus becoming inactive in the computation process. 
While some design criteria that reduces the possibility of secondary structure formation has been proposed by Milenkovic and Kashyap (2006), the main contribution of this work is to provide an explicit construction of DNA codes that completely avoid secondary structure of arbitrary stem length.

Formally, given codeword length $n$ and arbitrary integer $m\ge 2$, we provide efficient methods to construct DNA codes of length $n$ that avoid secondary structure of any stem length more than or equal to $m$. Particularly, when $m=3$, our constructions yield a family of DNA codes of rate 1.3031 bits/nt, while the highest rate found in the prior art was 1.1609 bits/nt. In addition, for $m\ge 3\log n + 4$, we provide an efficient encoder that incurs only one redundant symbol.



\end{abstract}

\section{Introduction}
DNA computing is an emerging branch of computing that uses DNA, biochemistry, and molecular biology hardware. 
The field of DNA computation started with the following demonstration by Adleman in 1994~\cite{adle:1994}. In this seminal experiment, Adleman solved an instance of the directed traveling salesperson problem by first representing each city with a synthetic DNA molecule. Then by allowing the strands to hybridize in a highly parallel fashion, Adleman obtained the desired solution. Since then, similar methods have been expanded to several attractive applications, including the development of storage technologies \cite{church2012,fountain2017,Organick:2018,goldman2013}, and cell-based computation systems for cancer diagnostics and treatment \cite{son:2004}. Recently, the hybridization process was exploited to allow random access in DNA data storage~\cite{yazdi:2017}.

In DNA computing, only short single-stranded DNA sequences (or {\em oligonucleotide sequences}) are used, where each of them is an oriented word consisting of four bases (or {\em nucleotides}): Adenine ({\bf A}), Thymine  ({\bf T}), Cytosine ({\bf C}), and Guanine ({\bf G}). A set of encoded DNA sequences (also called DNA codewords), that satisfies certain special properties (or {\em constraints}) for DNA computing purposes, is called a DNA code. A broad description of the kinds of constraint problems that arise in coding for DNA computing was introduced by Milenkovic and Kashyap in 2006 \cite{O2:2005}, including {\em constant {\bf G}{\bf C}-content constraint} (refers to the percentage of nucleotides that are either {\bf G} or {\bf C}), {\em Hamming distance constraint} (that requires DNA codewords to be sufficiently different among themselves), and {\em secondary structure formation avoidance constraint} (that prevents DNA sequence to fold back upon itself, and consequently becoming inactive in the computation process). 
Similar considerations were described in~\cite{yazdi:2018, chee:2020} for the design of primer address sequences in random access of DNA-based data storage systems.
While constant {\bf G}{\bf C}-content constraint and Hamming distance constraint have been extensively investigated \cite{K:2021, KDG:2021, O:2005, O2:2005, TT:2021, KING:2003, KING:2005, segmented}, the study for secondary structure avoidance is much less profound.


For a DNA sequence, a secondary structure is formed by a chemically active to fold back onto itself by complementary base pair hybridization (illustrated via Figure 1). Here, the {\em Watson-Crick complement} is defined as: $\overline{{\bf A}}={\bf T}, \overline{\bf T}={\bf A}, \overline{\bf C}={\bf G}$, and $\overline{\bf G}={\bf C}$. For a sequence $\bx=x_1x_2x_3\ldots x_{n-1}x_n$ over the DNA alphabet $\D=\{{\bf A}, {\bf T}, {\bf C}, {\bf G}\}$, the {\em reverse-complement} of $\bx$ is defined as ${\rm RC}({\bx})={\overline{x_n}} \text{ } \overline{x_{n-1}}\ldots  \overline{x_3}  \text{ } \overline{x_2}  \text{ } \overline{x_1}$. In Figure 1, sub-sequences ${\color{ForestGreen}{\bx={\bf A}{\bf T}{\bf A}{\bf C}{\bf C}}}$ and ${\color{blue}{\by={\rm RC}({\bx})={\bf G}{\bf G}{\bf T}{\bf A}{\bf T}}}$ of the DNA sequence $\sigma$ bind to each other after pairing of {\bf A} with {\bf T} and {\bf G} with {\bf C}, forming a secondary structure with a loop and a {\em stem} of length 5. DNA sequences with secondary structures are less active in the computation process \cite{O2:2005}, and hence, before reading such sequences in a wet lab, they need to be unfolded, costing more resources and energy. There exist some simple dynamic programming techniques \cite{Bres:1986,RN:1980} that can approximately predict the secondary structures in a given DNA sequence (for example, see the Nussinov-Jacobson (NJ) algorithm in \cite{RN:1980} as one of the most widely used schemes). Based on the NJ algorithm, the authors in \cite{O2:2005, O:2005} found some design criteria that reduce the possibility of secondary structure formation in a codeword. A natural question is whether there exists efficient design of DNA constrained codes that avoid the formation of secondary structures. 


\begin{figure*}[h]
\begin{center}
\includegraphics[width=9cm]{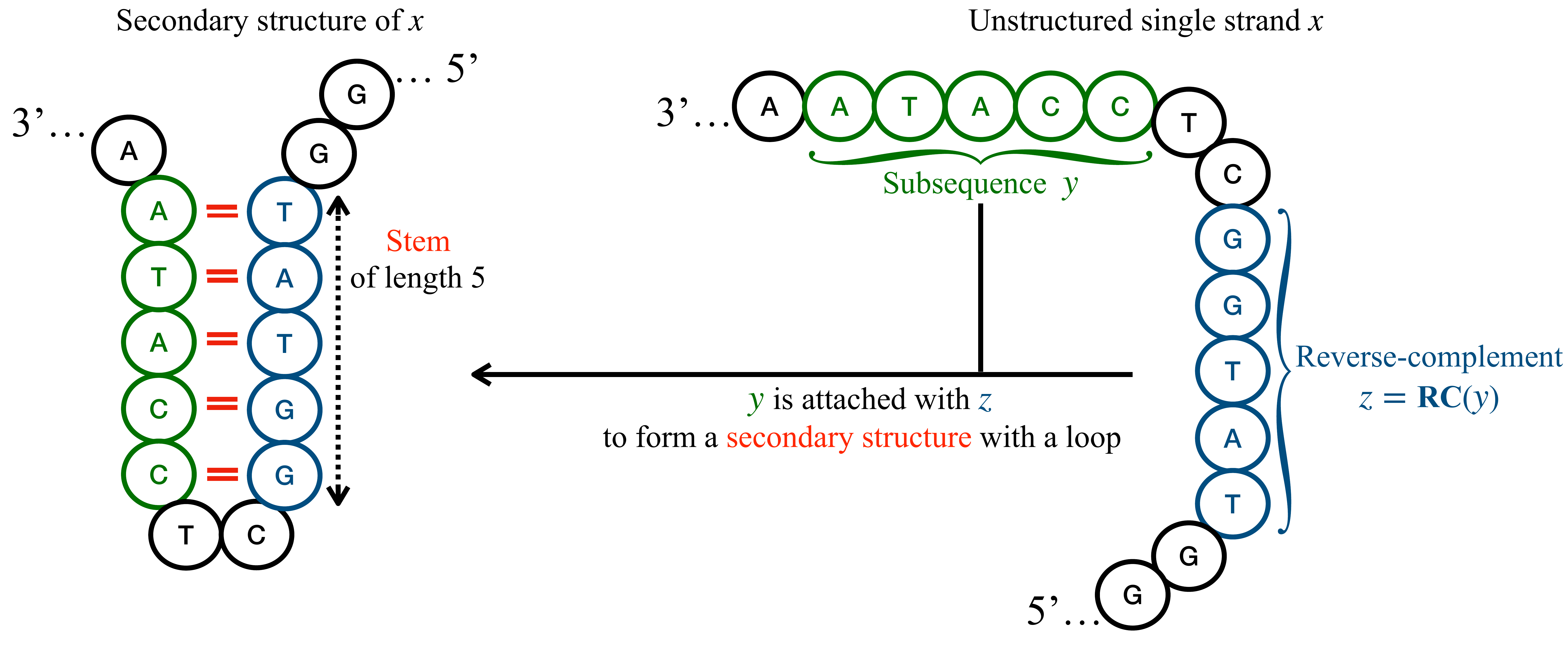}
\end{center}
\caption{DNA secondary structure model. Here, the Watson-Crick complement is: $\overline{{\bf A}}={\bf T}, \overline{\bf T}={\bf A}, \overline{\bf C}={\bf G}$, and $\overline{\bf G}={\bf C}$.}
\label{fig1}
\end{figure*}


It has been shown experimentally that the number of base pairs in stem regions (or {\em stem length}) is one important factor influencing the secondary structure of a DNA sequence. Given codeword length $n$ and an integer $m\ge 2$, we study the problem of constructing DNA codes of length $n$ that avoid secondary structure of any stem length more than or equal to $m$. To the best of our knowledge, this work is the first attempt aimed at providing a rigorous solution for DNA codes avoiding secondary structure for general stem lengths.

\section{Preliminary}\label{sec:prelim}
In this work, we use $\D$ to denote the DNA alphabet, where $\D=\{{\bf A}, {\bf T}, {\bf C}, {\bf G}\}$. Here, we have the Watson-Crick complement where $\overline{{\bf A}}={\bf T}, \overline{\bf T}={\bf A}, \overline{\bf C}={\bf G}$, and $\overline{\bf G}={\bf C}$. 

Given two sequences $\bx$ and $\by$, we let $\bx\by$ denote the {\em concatenation} of the two sequences. 

Throughout this work, given a sequence $\bx$ of length $n$, we say $\by$ is a subsequence of length $k$ of $\bx$, where $k\le n$, if $\by=x_ix_{i+1}\ldots x_{i+k-1}$ for some $1\le i\le n-k+1$. In other words, we only consider the subsequences including consecutive symbols in $\bx$. Two subsequences $\by$ and $\bz$ of $\bx$ are said to be {\em non-overlapping} if we have $\by=x_ix_{i+1}\ldots x_{i+k-1}$, $\bz=x_jx_{j+1}\ldots x_{j+\ell-1}$, where $i>j+\ell-1$ or $j>i+k-1$.


\begin{definition}
For a DNA sequence $\bx \in \D^n$, $\bx=x_1x_2\ldots x_n$, the reverse-complement of $\bx$, is defined as ${\rm RC}({\bx})={\overline{x_n}} \text{ } \overline{x_{n-1}}\ldots  \overline{x_3}  \text{ } \overline{x_2}  \text{ } \overline{x_1}$. 
\end{definition}

\begin{definition}
Given $0<m\le n$, a DNA sequence $\bx\in \D^n$ is said to be $m$-secondary structure avoidance (or $m$-SSA) sequence if for all $k\ge m$, there does not exist any pair of non-overlapping subsequences $\by, \bz$ of length $k$ of $\bx$ such that $\by={\rm RC}({\bz})$. A code $\C$ is said to be an $(n,\D;m)$ SSA code if for every codeword $\bx\in \C \cap \D^n$, we have $\bx$ is $m$-SSA. 
\end{definition}

The following result is immediate. 
\begin{lemma}\label{lemma1}
Given $m,n>0$, if a sequence $\bx\in \D^n$ is  $m$-SSA then $\bx$ is $m'$-SSA for all $m'>m$. 
\end{lemma}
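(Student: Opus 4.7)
The plan is to observe that this lemma is essentially an unpacking of the definition of $m$-SSA: it involves a universally quantified statement over the set $\{k : k \ge m\}$, and enlarging $m$ only shrinks this set, which makes the condition weaker.

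Concretely, I would proceed by direct implication. Assume $\bx \in \D^n$ is $m$-SSA, and fix an arbitrary $m' > m$. By Definition~2 applied to $\bx$ at parameter $m$, for every integer $k \ge m$ there do not exist non-overlapping subsequences $\by, \bz$ of length $k$ of $\bx$ with $\by = {\rm RC}(\bz)$. I would then note the set inclusion $\{k \in \bbZ : k \ge m'\} \subseteq \{k \in \bbZ : k \ge m\}$, which holds because $m' > m$. Hence, for every $k \ge m'$ we also have $k \ge m$, so the non-existence of such pairs $\by, \bz$ of length $k$ follows from the $m$-SSA hypothesis. Applying Definition~2 again, now at parameter $m'$, concludes that $\bx$ is $m'$-SSA.

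There is really no obstacle here, since the argument is a one-line observation about the monotonicity of a universally quantified condition. The only subtlety worth flagging is that the definition implicitly requires $k \le n$ (one cannot have a subsequence of length exceeding $n$), but this causes no issue: the inclusion of index sets still holds after intersecting with $\{k : k \le n\}$, and in particular if $m' > n$ then the $m'$-SSA condition is vacuous. Thus the proof can be stated in two or three lines as an immediate corollary of Definition~2.
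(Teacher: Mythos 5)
Your proof is correct and matches the paper's intent: the paper simply declares the lemma ``immediate,'' and your argument spells out exactly why --- the $m$-SSA condition quantifies over all $k \ge m$, and $\{k : k \ge m'\} \subseteq \{k : k \ge m\}$ when $m' > m$, so the condition only gets weaker. Nothing further is needed.
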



For a code $\C \subseteq \D^n$, the {\em code rate} is measured by the value $\log |\C|/n$. Intuitively, it measures the number of information bits stored in each DNA symbol.  Suppose that we have an infinite family of codes $\{\C_n\}_{n=1}^\infty$, where $\C_n$ is a code of length $n$, then the asymptotic rate of the family is ${\bf r} \triangleq \lim_{n\to \infty} \frac{\log |\C_n|}{n}$. Here, we adopt the notation $\log$ to mean logarithm base two. %


\begin{definition} 
Given $m>0$, for $n>0$, let ${\rm A}(n,\D;m)$ be the total number of DNA sequences of length $n$ that are $m$-SSA. The channel capacity, denoted by ${\rm c}_m$, is defined by: 
\begin{small}
\begin{equation*} 
{\rm c}_m=\lim_{n \to \infty} \frac{\log {\rm A}(n,\D;m)}{n}.
\end{equation*}  
\end{small}
\end{definition}

The following result is immediate. 

\begin{lemma}\label{trivial-bound} 
Given $m>0$, let $S_m$ be the set of all DNA sequences of length $m$ such that, there is no pair of sequences $\by, \bz \in S_m$, not necessary distinct, such that $\by={\rm RC}(\bz)$. We then have ${\rm c}_m \le 1/m \log |S_m|$. 
\end{lemma}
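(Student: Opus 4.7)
The plan is to chop any $m$-SSA sequence into disjoint length-$m$ blocks and bound the number of admissible block tuples.

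First I would write $n = km + r$ with $0 \le r < m$. Truncating the last $r$ symbols of an $m$-SSA sequence of length $n$ preserves the $m$-SSA property (removing symbols deletes witnesses but creates none), so ${\rm A}(n,\D;m) \le 4^r \cdot {\rm A}(km,\D;m)$.

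Next, I would partition an $m$-SSA sequence of length $km$ into $k$ disjoint length-$m$ blocks $\by_1,\ldots,\by_k$. Because distinct blocks are non-overlapping length-$m$ substrings, the $m$-SSA condition forces $\by_i \ne {\rm RC}(\by_j)$ for every $i \ne j$. Call a tuple $(\by_1,\ldots,\by_k) \in (\D^m)^k$ \emph{admissible} if it satisfies this pairwise condition; thus ${\rm A}(km,\D;m)$ is at most the number $N_k$ of admissible tuples.

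The crux is to bound $N_k$ by $|S_m|^k$ up to a subexponential factor in $k$. For an admissible tuple, the distinct values form a set with no ${\rm RC}$-pair, and moreover any self-reverse-complementary value appears at most once (such values do not exist for odd $m$, and there are at most $4^{m/2}$ of them for even $m$). Setting aside the at most $4^{m/2}$ positions that carry self-RC values, which together contribute only an $O(k^{4^{m/2}})$ factor via position and value choices, the remaining blocks lie inside some maximum ``no RC pair'' set $T \subseteq \D^m$. Such maximum sets $T$ are exactly those obtained by choosing one representative from each size-$2$ orbit of ${\rm RC}$; there are $2^{|S_m|}$ of them, each of size $|S_m|$. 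A union bound over these $T$ then yields $N_k \le C(m)\, k^{4^{m/2}} \,|S_m|^k$ for a constant $C(m)$ depending only on $m$.

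Combining everything, ${\rm A}(n,\D;m) \le 4^r \cdot C(m)\, k^{4^{m/2}} \,|S_m|^k$. Taking logarithms, dividing by $n = km+r$, and letting $n \to \infty$, every term except $\tfrac{1}{m}\log|S_m|$ vanishes, yielding ${\rm c}_m \le \tfrac{1}{m}\log|S_m|$. The main obstacle is the ``canonical form'' bookkeeping in the key step: the naive union bound over the $2^{|S_m|}$ maximum sets $T$ could, in principle, inflate the rate, but since $2^{|S_m|}$ depends only on $m$ (not on $k$), it is absorbed into the subexponential correction and does not affect the asymptotic rate.
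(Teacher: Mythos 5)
Your argument is correct. Note that the paper offers no proof at all for this lemma --- it is dismissed as ``immediate'' --- so there is nothing to compare against; what you have written is the natural blocking argument the authors presumably had in mind: truncate to a multiple of $m$, split into $k$ disjoint length-$m$ blocks, observe that non-overlapping blocks cannot form an ${\rm RC}$-pair, and count block tuples via one-representative-per-orbit sets, with all correction factors ($4^r$, the $2^{|S_m|}$ union bound, the $k^{O(1)}$ bookkeeping for self-reverse-complementary blocks) depending on $m$ but growing at most polynomially in $k$, hence vanishing after taking $\tfrac{1}{n}\log(\cdot)$ and $n\to\infty$. The one genuinely non-immediate point --- that a single self-reverse-complementary block is permitted by the $m$-SSA condition (a block does not form a non-overlapping pair with itself) and therefore must be set aside separately, at most once per such value --- is exactly the detail the paper's ``immediate'' glosses over, and you handle it correctly.
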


Observe that the size of $S_m$ can be computed easily for constant $m$, a trivial upper bound is that $|S_m| \le 4^m/2$, and consequently, we obtain ${\rm c}_2 \le 1.5$ and ${\rm c}_3 \le 1.67$. 

To construct an $(n,\D;m)$ SSA code for arbitrary $m>0$ by concatenation method, one can find the largest set $S_N$ for some suitable value of $N$, such that, for $n=Nk$, each codeword is a concatenation of $k$ sequences of length $N$ from $S_N$ and each concatenation does not create a reverse-complement subsequence from previous concatenations. The construction yields a family of DNA codes of rate $1/N \log |S_N|$ bits/nt. For example, for $m=3$, Krishna Gopal Benerjee and Adrish Banerjee \cite{K:2021} constructed an $(n,\D;3)$ SSA code via such a set $S=\{{\bf A}{\bf A}, {\bf C}{\bf C}, {\bf A}{\bf C}, {\bf C}{\bf A},{\bf T}{\bf C}\}$. 


\begin{theorem}[Benerjee and Banerjee \cite{K:2021}]
Set $S=\{{\bf A}{\bf A}, {\bf C}{\bf C}, {\bf A}{\bf C}, {\bf C}{\bf A},{\bf T}{\bf C}\}$. Let $\C$ be the DNA code of length $2n$ where each codeword is a concatenation of words of length two from $S$. We then have $\C$ is an $(n,\D;3)$ SSA code, i.e. every codeword of $\C$ is $3$-SSA. The size of the code is $|\C|=5^n$, and the code rate is $1/2 \log 5 = 1.1609$ bits/nt.  
\end{theorem}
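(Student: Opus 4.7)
The plan is to verify three things: the size $|\C|=5^n$, the rate $\log 5/2$, and that every codeword is $3$-SSA. The first two are immediate: each of the $n$ blocks is one of $5$ words in $S$ chosen independently, so $|\C|=5^n$, and the rate is $\log(5^n)/(2n)=\log 5/2\approx1.1609$. So the real content is the $3$-SSA property. By Lemma~\ref{lemma1}, and because any length-$k$ non-overlapping RC-pair with $k\ge 3$ forces a length-$3$ non-overlapping RC-pair (e.g.\ take the first three symbols of each), it suffices to show that no codeword $\bx$ contains a length-$3$ subsequence $\by$ whose reverse complement ${\rm RC}(\by)$ also appears as a subsequence of $\bx$.

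The central observation is structural: every word in $S$ uses only the letters $\{{\bf A},{\bf C},{\bf T}\}$, so no codeword contains a ${\bf G}$; and the letter ${\bf T}$ appears only as the first symbol of the block ${\bf T}{\bf C}$, so \emph{every occurrence of ${\bf T}$ in a codeword is immediately followed by ${\bf C}$}. In particular, no two ${\bf T}$'s can be adjacent.

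Using these two facts I would split on whether $\by$ contains a ${\bf C}$. If it does, then ${\rm RC}(\by)$ contains a ${\bf G}$ and cannot be a subsequence of any codeword. Otherwise $\by\in\{{\bf A},{\bf T}\}^3$; writing $\by=y_1y_2y_3$, if $y_1={\bf T}$ then $y_2$ would be ${\bf C}$ (contradiction), and if $y_2={\bf T}$ then $y_3$ would be ${\bf C}$ (contradiction), so $y_1=y_2={\bf A}$. Thus the only candidates are $\by={\bf A}{\bf A}{\bf A}$ or $\by={\bf A}{\bf A}{\bf T}$, with reverse complements ${\bf T}{\bf T}{\bf T}$ and ${\bf A}{\bf T}{\bf T}$ respectively. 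Each of these contains two adjacent ${\bf T}$'s, which we already ruled out. Hence ${\rm RC}(\by)$ never appears in $\bx$, and $\bx$ is $3$-SSA.

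There is no real obstacle; the only step requiring care is the structural observation about ${\bf T}$'s, which relies specifically on the fact that the only element of $S$ containing ${\bf T}$ is ${\bf T}{\bf C}$ and that ${\bf T}$ sits in the first (odd) position of its block, so the letter following it is always the ${\bf C}$ from the same block rather than the first symbol of a later block. Once that is noted, the case analysis collapses to the two candidates above and the verification is a single line each.
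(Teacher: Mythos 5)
Your proposal is correct, but note that the paper does not actually prove this statement: it is quoted as a known result of Benerjee and Banerjee \cite{K:2021}, so there is no in-paper proof to match against. Your argument stands on its own and is sound: the reduction from arbitrary stem length $k\ge 3$ to stem length exactly $3$ is valid, and the two structural facts (no codeword contains ${\bf G}$, and every ${\bf T}$ is immediately followed by the ${\bf C}$ of its own block, so ${\bf T}{\bf T}$ never occurs) do close out both cases of the split on whether the window contains a ${\bf C}$. It is worth observing that this theorem is \emph{not} a special case of the paper's Construction~1: there, for $m=3$ one has $t=\lceil m/3\rceil=1$, and the required condition (no length-$1$ reverse-complement pair across blocks) fails for $S$, since ${\bf T}{\bf C}$ contains ${\bf T}$ while ${\bf A}{\bf A}$ contains ${\bf A}=\overline{{\bf T}}$; so a direct argument like yours is genuinely needed. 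One small imprecision: when extracting a length-$3$ violating pair from a length-$k$ one, the reverse complement of the first three symbols of $\by$ is the \emph{last} three symbols of $\bz$, not the first three of each; this does not affect the validity of the reduction, since a non-overlapping length-$3$ pair is still produced.
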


\subsection{Paper Organisation and Our Main Contribution}

Since the number of base pairs in stem regions (or stem length) is one important factor influencing the secondary structure of a DNA sequence, this work aims at providing a rigorous solution for $(n,\D;m)$ SSA codes given arbitrary $m$. 
The paper is organised as follows. 
\begin{itemize}
\item Section III presents two efficient constructions of  $(n,\D;m)$ SSA codes for arbitrary $m>0$. The first construction is based on {\em block concatenation}, which concatenates blocks of fixed length $m$ from a predetermined set. On the other hand, crucial to the second construction is the concept of {\em symbol-composition constrained codes}. Particularly, when $m = 3$, the second construction yields a family of DNA codes of rate $1.3031$ bits/nt, which is higher than the code rate in \cite{K:2021}. 
\item Section IV presents a linear-time encoding method for $(n,\D;m)$ SSA code with only one redundant symbol whenever $m\ge 3 \log n+4$. The coding method is based on {\em sequence replacement technique}. 

\end{itemize}

\section{Constructions of $(n,\D;m)$ SSA Codes for arbitrary $m>0$} 
The first method is based on block concatenation, which concatenates blocks of length $m$ from a predetermined set.
\subsection{Constructions via Block Concatenation}
\begin{construction}\label{fixed-concatenate}
Given $m>0$, $n=mk$ for some integer $k>0$, set $t=\ceil{m/3}$. Let $S_m^*$ be the set of all DNA sequences of length $m$ such that for any pair of sequences $\bx_1,\bx_2 \in S_m^*$, not necessary distinct, there is no pair of subsequences $\by$ of $\bx_1$ and $\bz$ of $\bx_2$ of length $t$ such that $\by={\rm RC}(\bz)$. Let $\C$ be the DNA code of length $n$, where each codeword is a concatenation of $k$ sequences of length $m$ in $S_m^*$.
\end{construction} 

\begin{theorem}\label{con1}
The constructed code $\C$ from Construction~\ref{fixed-concatenate} is an $(n,\D;m)$ SSA code. 
\end{theorem}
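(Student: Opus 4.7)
The plan is to argue by contradiction. Suppose some codeword $\bx \in \C$ contains a non-overlapping pair of subsequences $\by, \bz$ of some common length $k \ge m$ with $\by = {\rm RC}(\bz)$. First I would reduce to $k = m$: taking $\by' = y_1 y_2 \cdots y_m$ (the first $m$ symbols of $\by$) and $\bz' = z_{k-m+1} \cdots z_k$ (the last $m$ symbols of $\bz$) preserves the reverse-complement relation, because $y_i = \overline{z_{k-i+1}}$ forces $y'_i = \overline{z'_{m-i+1}}$, and $\by', \bz'$ remain non-overlapping subsequences of $\bx$. So it suffices to rule out a length-exactly-$m$ RC pair.

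Write $\bx = \bx_1 \bx_2 \cdots \bx_k$ with each $\bx_i \in S_m^*$. Since $|\by| = m$ equals the block length, the window $\by$ either sits inside a single block $\bx_i$ or straddles two consecutive blocks $\bx_i \bx_{i+1}$. Let $a \in \{0, 1, \ldots, m\}$ denote the number of symbols of $\by$ lying in its left block (with $a=0$ or $a=m$ corresponding to the single-block case), and define $b \in \{0,1,\ldots,m\}$ analogously for $\bz$.

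The core step is to transport both block splittings onto the common index set $\{1, 2, \ldots, m\}$ via the RC bijection $y_i \leftrightarrow z_{m-i+1}$. The $\by$-split falls between $y$-indices $a$ and $a+1$; the $\bz$-split (originally between $z$-positions $b$ and $b+1$) translates to the break between $y$-indices $m-b$ and $m-b+1$. These at most two break points partition $\{1, \ldots, m\}$ into at most three intervals whose total length is $m$, so by pigeonhole the longest interval $I$ has length at least $\lceil m/3 \rceil = t$. Picking $t$ consecutive indices inside $I$ yields a length-$t$ subsequence $\by^{*}$ lying entirely inside one block $\bx_{\alpha} \in S_m^*$, together with its paired length-$t$ subsequence $\bz^{*}$ lying entirely inside one block $\bx_{\beta} \in S_m^*$, satisfying $\by^{*} = {\rm RC}(\bz^{*})$. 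This directly contradicts the defining property of $S_m^*$.

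The main obstacle I anticipate is the bookkeeping around degenerate cases. When $\by$ (or $\bz$) lies in a single block, the associated break point collapses to $0$ or $m$, so one must verify that the pigeonhole still yields an interval of length at least $t$; this is fine because the total length stays $m$ while the number of intervals only decreases. A second subtle point is that $\bx_{\alpha}$ and $\bx_{\beta}$ may coincide (both length-$t$ windows could end up in the same block $\bx_i$), but the definition of $S_m^*$ explicitly allows $\bx_1 = \bx_2$, so the contradiction remains valid. Once these cases are dispatched the argument is a clean pigeonhole, and the choice $t = \lceil m/3 \rceil$ is seen to be exactly what is required to split a length-$m$ window across two arbitrary block boundaries.
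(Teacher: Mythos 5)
Your proof is correct, and it reaches the same contradiction as the paper (a length-$t$ reverse-complement pair with each half inside a single block of $S_m^*$, with $t=\ceil{m/3}$ chosen exactly so that a pigeonhole over at most three pieces works), but the route is genuinely cleaner. The decisive move you make that the paper does not is the initial reduction to an RC pair of length \emph{exactly} $m$: once the windows have length $m$, each one meets at most two consecutive blocks, so the entire multi-block case that occupies the first half of the paper's proof (the case $h>1$, where $Y_2$ swallows a whole block $\bx_{i+1}$ and one must argue about how ${\rm RC}(\bx_{i+1})$ sits inside $\bz$) simply disappears. After that, where the paper argues asymmetrically --- first that $|Y_1|\ge t$ forces ${\rm RC}(Y_1)$ to straddle a block boundary with both pieces shorter than $t$, hence $|Y_1|<2t$, hence $|Y_2|\ge t$, and then extracts the contradiction from $Y_2$ --- you transport both block boundaries onto a common index set $\{1,\dots,m\}$ and apply a single symmetric three-interval pigeonhole. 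Your version buys a shorter argument with fewer cases and makes the role of the constant $t=\ceil{m/3}$ transparent (two break points, three intervals); the paper's version works directly with windows of arbitrary length $m'\ge m$ at the cost of the extra case analysis. You also correctly flag and dispatch the two degenerate points (collapsed break points, and the two blocks coinciding, which the definition of $S_m^*$ covers by allowing $\bx_1=\bx_2$).
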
 

\begin{proof}
We prove the correctness of Theorem~\ref{con1} by contradiction. Suppose that, there exists a codeword $\bc\in \C, \bc=\bx_1\bx_2\ldots \bx_k$, where $\bx_i \in S_m^*$, and $\bc$ is not $m$-SSA. In other words, there exists two non-overlapping subsequences $\by$,$\bz$ of $\bc$ of length $m'\ge m$ such that $\by={\rm RC}(\bz)$. 

Suppose that $\by=Y_1 Y_2$ where $Y_1$ is a subsequence of $\bx_i$, and $Y_2$ is a subsequence of $\bx_{i+1}\bx_{i+2}\ldots \bx_{i+h}$ for some $h\ge 1$. We have $\bz={\rm RC}(Y_2) {\rm RC}(Y_1)$. The trivial case is if $h>1$, or $Y_2$ is of length more than $m$, then $\bx_{i+1}$ is a subsequence of $Y_2$ and $ {\rm RC}(\bx_{i+1})$ is a subsequence of $\bz$. Clearly, if $ {\rm RC}(\bx_{i+1}) \equiv \bx_j$, we have a contradiction. On the other hand, if $ {\rm RC}(\bx_{i+1})=W_1 W_2$ where $W_1$ is a subsequence of $\bx_j$ and $W_2$ is a subsequence of $\bx_{j+1}$ for some $j$, then at least one subsequence $W_1$ or $W_2$ is of size more than $t$, we also have a contradiction. We conclude that $h=1$, or $Y_2$ is simply a subsequence of $\bx_{i+1}$.

Now, since $\by=Y_1Y_2$ is of length $m'\ge m$, at least $Y_1\ge t$ or $Y_2\ge t$. W.l.o.g, assume that  $Y_1\ge t$.

We observe that ${\rm RC}(Y_1)$ cannot be a subsequence of any $\bx_j$ by Construction 1. In other words, ${\rm RC}(Y_1)=W_1 W_2$ where $W_1$ is a subsequence of $\bx_j$ and $W_2$ is a subsequence of $\bx_{j+1}$ for some $j$. Similarly, we observe that the length of $W_1, W_2$ must be strictly smaller than $t$, otherwise, for example, if the length of $W_1$ is more than or equal to $t$, then two sequences $\bx_i$ and $\bx_j$ in $S_m^*$ contain ${\rm RC}(W_1)$ and $W_1$ as subsequences, we have a contradiction. Since both the length of $W_1, W_2$ must be strictly smaller than $t$, causing the length of $Y_1$ is smaller than $2t$, we conclude that the length of $Y_2$ is at least $t$. 

Now, let $\bU={\rm RC}(Y_2) \cap \bx_{j+1}$, the subsequence that belongs to both $\bx_{j+1}$ and ${\rm RC}(Y_2)$, which is of size at least $t$. We then have $\bU$ is a subsequence of $\bx_{j+1}$ while ${\rm RC}(\bU)$ is a subsequence of ${\rm RC}({\rm RC}(Y_2))=Y_2$, a subsequence of $\bx_{i+1}$. We then have a contradiction. 

In conclusion, we have $\C$ is an $(n,\D;m)$ SSA code. We highlight our proof sketch of Theorem~\ref{con1} in Figure~\ref{fig2}.
\end{proof}
\begin{figure}[h]
\caption*{Claim 1: When $|Y_1|\ge t$, we observe that ${\rm RC}(Y_1)$ cannot be a subsequence of any $\bx_{j+1}$.}
\begin{center}
\includegraphics[width=5.5cm]{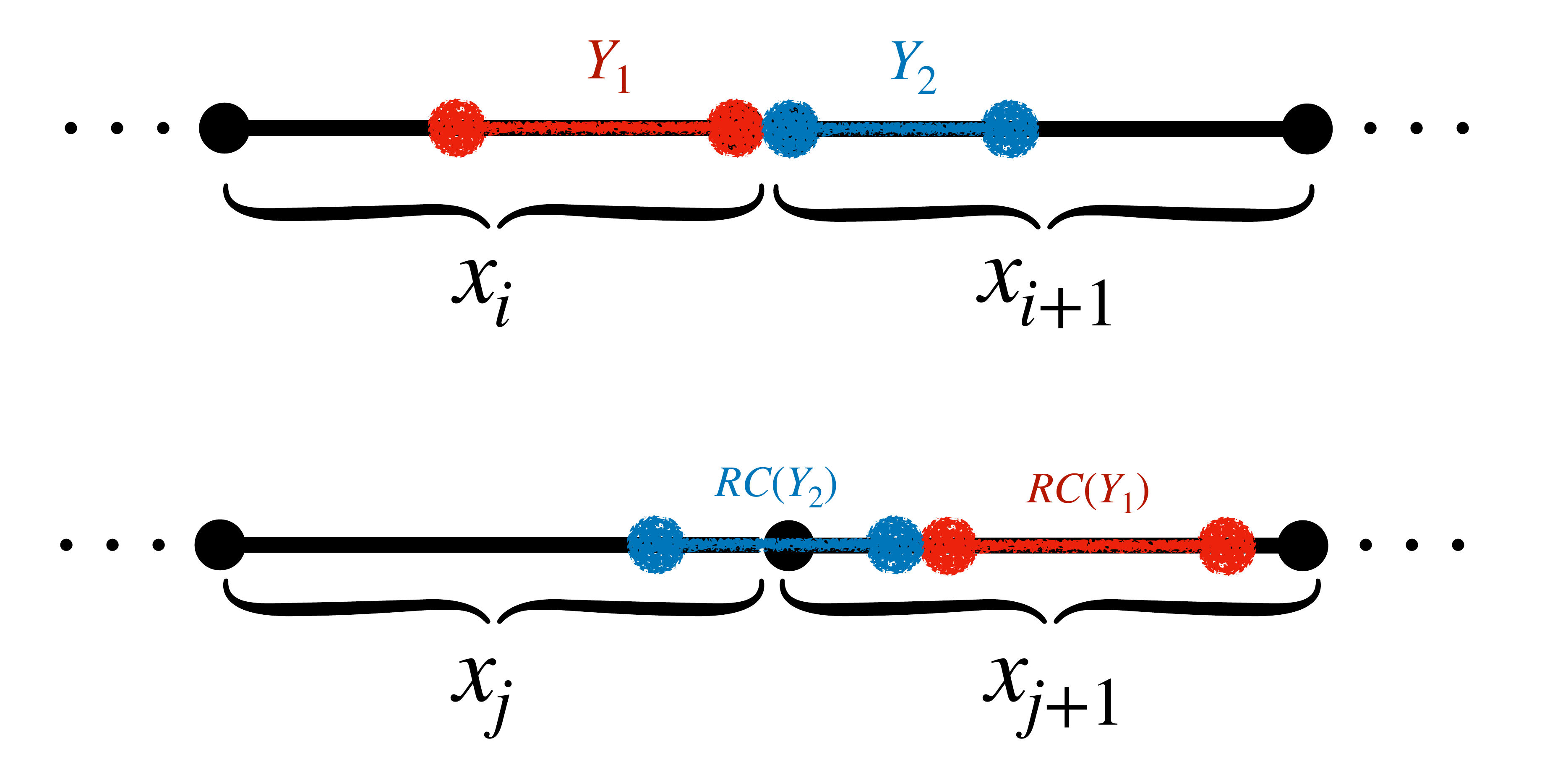}
\end{center}
\caption*{Claim 2: When $Y_1=W_1W_2$ and ${\rm RC}(Y_1)={\rm RC}(W_2){\rm RC}(W_1)$, we must have $|W_1|\le t, |W_2|\le t$.}
\begin{center}
\includegraphics[width=6cm]{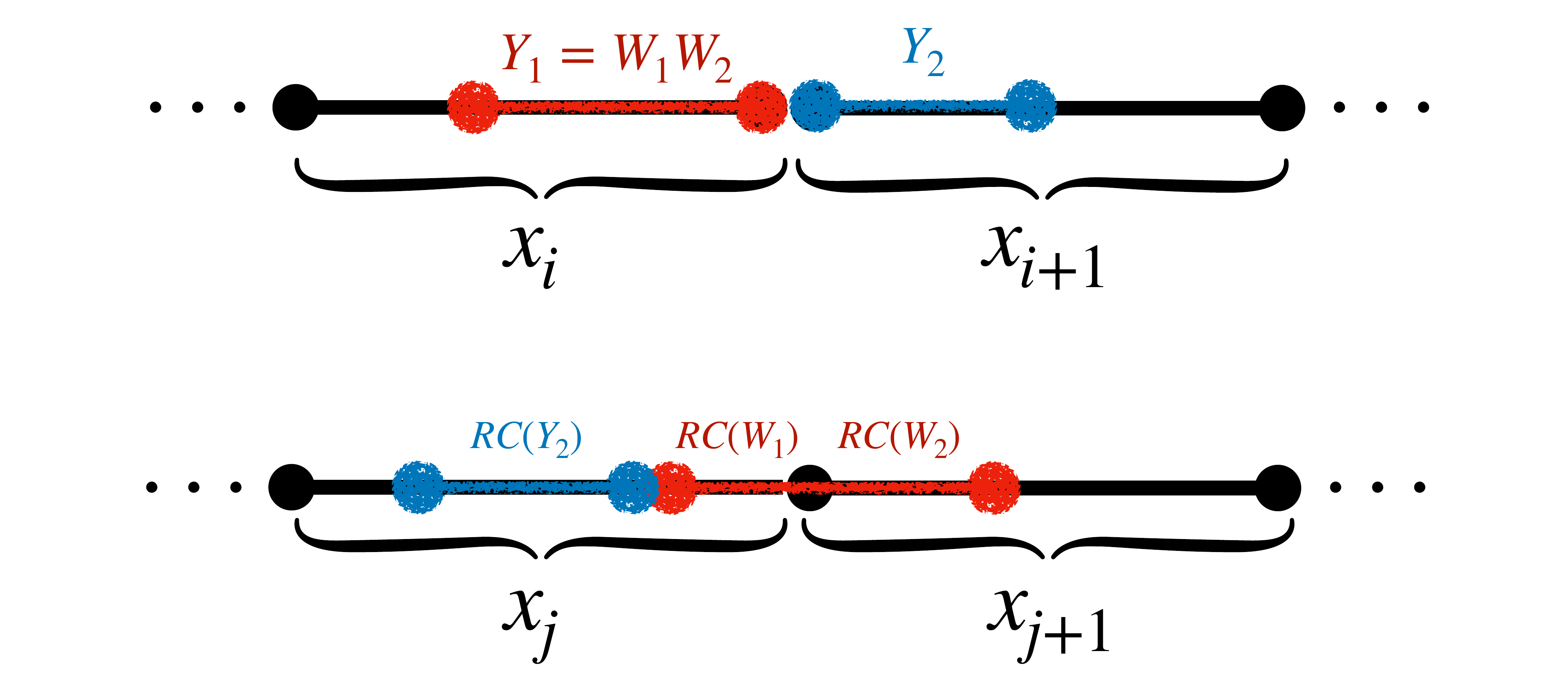}
\end{center}
\caption*{Consequently, $|Y_2| \ge t$, and we have $(\bx_{i+1},\bx_j)$ form a pair in $S_m^*$ that violate the condition.}
\caption{Proof Sketch of Theorem~\ref{con1}.}
\label{fig2}
\end{figure}
\begin{remark} Observe that, the set $S_m^*$ can be constructed via exhaustive search with complexity $O(2^m)$. In Section IV, we show that when $m$ is sufficiently large, $m\ge 3\log n+4=\Theta(\log n)$, there exists an efficient encoding/decoding algorithm for $(n,\D;m)$ SSA codes with at most one redundant symbol. Hence, for the case $m=o(\log n)$, we can use Construction~\ref{fixed-concatenate} to construct $(n,\D;m)$ SSA codes with complexity $2^m=\Theta(n)$. 
\end{remark} 

\subsection{Constructions via Symbol-Composition Constrained Codes}
In this subsection, we present an efficient construction for $(n,\D;m)$ SSA codes 
by simply restricting the symbol-composition for every subsequence of length $m$. Particularly, when $m = 3$, our method yields a family of DNA codes of rate $1.3031$ bits/nt, which is higher than the code rate in \cite{K:2021}. 
\vspace{0.05in}

\noindent{\em High Level Description}.
We select a nucleotide $x \in \D=\{{\bf A}, {\bf T}, {\bf C}, {\bf G}\}$, and let $y=\overline{x}\in \D$. For some $0<k\le m$, we present an efficient method to construct an $(n,\D;m)$ SSA code $\C$ as follows. For every codeword $\bc\in\C$, every subsequence $\bz$ of length $m$ of $\bc$ contains at least $k$ symbols $x$ while $\bz$ contains at most $(k-1)$ symbols $y$. We refer such a constraint to as the {\em symbol-composition constraint}. 
It is easy to verify that such a constructed code $\C$ is an $(n,\D;m)$ SSA code. Clearly, suppose on the other hand, there exists a pair of subsequences $\bz_1, \bz_2$ of length $\ell\ge m$ in $\bc\in \C$, such that $\bz_2={\rm RC}(\bz_1)$. It implies that there exists two subsequences of length $m$, which are $\bz_1'$ of $\bz_1$ and $\bz_2'$ of $\bz_2$, and $\bz_2'={\rm RC}(\bz_1')$. Since $\bz_1'$ contains at least $k$ symbols $x$, we have $\bz_2'={\rm RC}(\bz_1')$ must contain at least $k$ symbols $y=\overline{x}$. We then have a contradiction.
\vspace{0.05in}

The following construction is for $m=3$ and $k=1$.

\begin{construction}[Symbol-Composition Constrained Codes for $m=3$, $k=1$]\label{con2}
Given $n>0$, we select $x={\bf A}$ and $y=\overline{x}={\bf T}$. Set $\D^*=\{{\bf A}, {\bf C}, {\bf G}\}$. Let $\C_n$ be the set of all DNA sequences of length $n$ from alphabet $\D^*$ such that for any $\bc\in\C_n$, every subsequence of length three of $\bc$ must contain an ${\bf A}$. 
\end{construction}

\begin{theorem} We have $|\C_1|=3, |\C_2|=9,|\C_3|=19$, and
\begin{equation*}
|\C_n|= |\C_{n-1}| + 2| \C_{n-2}| + 4 |\C_{n-3}|.
\end{equation*}  
In addition, $\C_n$ is an $(n,\D;3)$ SSA code for all $n>0$. The asymptotic rate of this code family is given by $\log (\lambda) \approx 1.3031$, where $\lambda\approx 2.4675$ is the largest real root of $x^3-x^2-2x-4=0$.
\end{theorem}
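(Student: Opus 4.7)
The plan is to establish the four claims in turn: the base cases, the recurrence, the SSA property, and the asymptotic rate. The base cases are direct counts: for $n\in\{1,2\}$ the length-three window constraint is vacuous, so $|\C_n|=3^n\in\{3,9\}$; for $n=3$, by complementation $|\C_3|=3^3-2^3=19$.

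For the recurrence, I would condition on the position of the rightmost ${\bf A}$ within the final three symbols of a codeword $c_1c_2\ldots c_n$. Since the last window $(c_{n-2},c_{n-1},c_n)$ must contain an ${\bf A}$, exactly one of three disjoint cases applies: (i) $c_n={\bf A}$; (ii) $c_n\in\{{\bf C},{\bf G}\}$ and $c_{n-1}={\bf A}$; (iii) $c_n,c_{n-1}\in\{{\bf C},{\bf G}\}$ and $c_{n-2}={\bf A}$. In case (i), the prefix $c_1\ldots c_{n-1}$ is an arbitrary element of $\C_{n-1}$, contributing $|\C_{n-1}|$; in case (ii), the prefix $c_1\ldots c_{n-2}$ is arbitrary in $\C_{n-2}$ with $2$ free choices for $c_n$, contributing $2|\C_{n-2}|$; in case (iii), the prefix $c_1\ldots c_{n-3}$ is arbitrary in $\C_{n-3}$ with $4$ free choices for $(c_{n-1},c_n)$, contributing $4|\C_{n-3}|$. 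The bijectivity in each case relies on the observation that any window containing a fixed ${\bf A}$ is automatically satisfied, so the specified suffix always completes a valid codeword and restriction to the prefix preserves the window constraint. Summing gives $|\C_n|=|\C_{n-1}|+2|\C_{n-2}|+4|\C_{n-3}|$.

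For the SSA property I would invoke the high-level argument already sketched in the subsection. Suppose for contradiction that some codeword $\bc\in\C_n$ contains two non-overlapping subsequences $\bz_1,\bz_2$ of equal length $\ell\ge 3$ with $\bz_2={\rm RC}(\bz_1)$. Take the length-three prefix $\bz_1'$ of $\bz_1$; its reverse complement is the length-three suffix $\bz_2'$ of $\bz_2$, and both are length-three subsequences of $\bc$. By the defining window constraint $\bz_1'$ contains at least one ${\bf A}$, so $\bz_2'$ must contain at least one ${\bf T}$; but $\bz_2'$ sits inside $\bc\in(\D^*)^n$, which uses only letters in $\D^*=\{{\bf A},{\bf C},{\bf G}\}$, a contradiction. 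For the asymptotic rate, the characteristic polynomial of the recurrence is $p(x)=x^3-x^2-2x-4$; I would check via sign changes of $p$ on $[2,3]$ that it has a unique real root $\lambda\approx 2.4675$, and from the product of the three roots being $4$ (the constant term up to sign) deduce that the complex-conjugate pair has modulus $\sqrt{4/\lambda}<\lambda$. Standard linear-recurrence theory then gives $|\C_n|=\Theta(\lambda^n)$, so the asymptotic rate equals $\log\lambda\approx 1.3031$.

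The main obstacle is the bijectivity check for the recurrence: one must verify both that every length-$n$ codeword strips to a valid shorter codeword and that every shorter codeword extends to a valid length-$n$ one via the prescribed suffix. The delicate point is case (iii), where fixing $c_{n-2}={\bf A}$ automatically satisfies all three windows overlapping position $n-2$, so no extra constraint is placed on $c_1\ldots c_{n-3}$ beyond membership in $\C_{n-3}$; this is what makes the coefficient of $|\C_{n-3}|$ exactly $4$ rather than something smaller. Everything else---the base cases, the SSA argument, and the spectral computation---is routine once the recurrence is in hand.
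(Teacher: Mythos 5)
Your proposal is correct and follows essentially the same route as the paper: the recurrence is obtained by classifying codewords according to the position of the last ${\bf A}$ among the final three symbols (the paper writes this as the disjoint union of suffix-extension sets $S^1_n, S^2_n, S^3_n$ built from $\C_{n-1}, \C_{n-2}, \C_{n-3}$), and the SSA property follows because every length-$3$ window contains an ${\bf A}$ while ${\bf T}$ never appears, so no reverse complement can occur. Your added justification of the asymptotic rate via the dominant root of $x^3-x^2-2x-4$ fills in a step the paper leaves implicit.
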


\begin{proof}
Consider the code $\C_n$. For a codeword $\bc \in \C_n$, for any subsequence $\bx$ of length $\ell \ge 3$ of $\bc$, we have $\bx$ includes ${\bf A}$. On the other hand, since $\overline{\bf A}={\bf T}$ is not used in $\bc$, there is no reverse-complement of $\bx$ in $\bc$. In conclusion, $\bc$ is 3-SSA, or $\C_n$ is an $(n,\D;3)$ SSA code. 

We now prove the cardinality of  $\C_n$. it is easy to verify that $|\C_1|=3, |\C_2|=9,|\C_3|=19.$ For $n\ge 4$, we construct $\C_n$ recursively as
follows: 
\begin{small}
\begin{align*}
S^1_n=& \{ \bx {\bf A}: \text{ for } \bx \in \C_{n-1}\} \\
S^2_n=& \{ \bx {\bf A} {\bf C},  \bx {\bf A} {\bf G}: \text{ for } \bx \in \C_{n-2}\} \\
S^3_n=& \{ \bx {\bf A} {\bf C} {\bf C}, \bx {\bf A} {\bf C} {\bf G}, \bx {\bf A} {\bf G} {\bf C}, \bx {\bf A} {\bf G} {\bf G}: \text{ for } \bx \in \C_{n-3}\}, \text{and} \\
\C_n =& S^1_n \cup S^2_n \cup S^3_n.
\end{align*}
\end{small}
In other words, $S^1_n$ is the set formed by concatenating all sequences in $\C_{n-1}$ with ${\bf A}$, $S^2_n$ is the set formed by concatenating all sequences in $\C_{n-2}$ with ${\bf A}{\bf C}$ or ${\bf A}{\bf G}$, and lastly, $S^2_n$ is the set formed by concatenating all sequences in $\C_{n-3}$ with ${\bf A} {\bf C} {\bf C}, {\bf A} {\bf C} {\bf G}, {\bf A} {\bf G} {\bf C},$ or ${\bf A} {\bf G} {\bf G}$. It is easy to verify that $S^i_n \cap S^j_n\equiv \emptyset$, and the union $S^1_n \cup S^2_n \cup S^3_n$ includes all possible sequences in $\C_n$. Therefore, we have $|\C_n|= |\C_{n-1}| + 2| \C_{n-2}| + 4 |\C_{n-3}|.$
\end{proof}

Construction~\ref{con2} can be generalized to construct $(n,\D;m)$ SSA codes with $k=1$ as follows. 

\begin{theorem}[Symbol-Composition Constrained Codes for General $m$, $k=1$]\label{con3}
Given $n,m>0$. Set $\D^*=\{{\bf A}, {\bf C}, {\bf G}\}$, and $\C_n(m)$ to be the set of all sequences $\bx$ of length $n$ from alphabet $\D^*$ such that every subsequence of length $m$ of $\bx$ include an ${\bf A}$. We then have $|\C_i(m)|=3^i$ for $0\le i\le m-1$, and 
\begin{equation*}
|\C_n(m)|=\sum_{j=0}^{m-1} 2^j |\C_{n-j-1}(m)| \text{ for } n\ge m.
\end{equation*}
We then have $\C_n(m)$ is an $(n,\D;m)$ SSA code for all $n>0$. The asymptotic rate of this code family is given by $\log (\lambda)$, where $\lambda$ is the largest real root of $x^m-\sum_{j=0}^{m-1} 2^j x^{m-j}=0$.  
\end{theorem}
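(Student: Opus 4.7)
The proof divides into three independent parts: verifying the SSA property, establishing the recurrence, and extracting the asymptotic rate.

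For the SSA property, I would simply observe that $\D^* = \{{\bf A}, {\bf C}, {\bf G}\}$ omits ${\bf T} = \overline{{\bf A}}$, so no codeword $\bc \in \C_n(m)$ contains any ${\bf T}$. Take any subsequence $\by$ of $\bc$ of length $\ell \ge m$: it contains some length-$m$ window of $\bc$, which by definition includes an ${\bf A}$. Hence ${\rm RC}(\by)$ contains a ${\bf T}$ and cannot appear as a subsequence of $\bc$. This shows $\bc$ is $m$-SSA. The base counts $|\C_i(m)| = 3^i$ for $0 \le i \le m-1$ are immediate, since sequences over $\D^*$ that are shorter than $m$ vacuously satisfy the constraint.

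For the recurrence, my plan is to partition $\C_n(m)$ (with $n \ge m$) by the position of the rightmost ${\bf A}$ in each codeword. Say a codeword $\bc$ falls into class $j$ if its trailing run of non-${\bf A}$ symbols has length exactly $j$, i.e., $c_{n-j} = {\bf A}$ and $c_{n-j+1}\cdots c_n \in \{{\bf C}, {\bf G}\}^j$. Since the window $c_{n-m+1}\cdots c_n$ must contain an ${\bf A}$, we have $j \le m-1$; conversely, for any $j \in \{0,1,\ldots,m-1\}$, any prefix $\bx \in \C_{n-j-1}(m)$ concatenated with ${\bf A}$ and any element of $\{{\bf C},{\bf G}\}^j$ yields a valid codeword. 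Indeed, every length-$m$ window crossing position $n-j$ contains the ${\bf A}$ placed there, and every length-$m$ window lying entirely inside the prefix is handled by $\bx \in \C_{n-j-1}(m)$; no length-$m$ window lies entirely to the right of the ${\bf A}$ because $j < m$. This bijection gives $|\C_{n-j-1}(m)|\cdot 2^j$ codewords in class $j$, and summing over $j$ produces the stated recurrence.

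The asymptotic rate then follows from the standard theory of constant-coefficient linear recurrences. Substituting $|\C_n(m)| = x^n$ into the recurrence produces a characteristic polynomial with strictly positive coefficients; by Perron--Frobenius applied to the companion matrix (equivalently, by positivity of the recurrence kernel), there is a unique positive real dominant root $\lambda$, and $|\C_n(m)| = \Theta(\lambda^n)$. Hence $\lim_{n\to\infty} \tfrac{\log|\C_n(m)|}{n} = \log \lambda$, as claimed. The step requiring the most care is the recurrence derivation, specifically the verification that the classification by the trailing non-${\bf A}$ run is both exhaustive and disjoint, and that windows crossing the boundary between the prefix and the $\{{\bf A}\}\{{\bf C},{\bf G}\}^j$ suffix are automatically safe. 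Once this bookkeeping is done, the counting and the asymptotic analysis are routine.
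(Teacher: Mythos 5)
Your proposal is correct and follows essentially the same route as the paper: the paper proves only the $m=3$ instance explicitly, and its decomposition of $\C_n$ into $S^1_n, S^2_n, S^3_n$ (appending ${\bf A}$, ${\bf A}\{{\bf C},{\bf G}\}$, or ${\bf A}\{{\bf C},{\bf G}\}^2$ to shorter codewords) is exactly your classification by the length of the trailing run of non-${\bf A}$ symbols, and the SSA argument via the absence of ${\bf T}$ is identical. Your explicit verification of the boundary windows and the Perron--Frobenius justification of the dominant root are slightly more detailed than what the paper writes, but they are not a different method.
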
 

\begin{remark} In general, given $m>k>0$, set $x={\bf A}$ and $y=\overline{x}={\bf T}$. we use $\C_n(m,k)$ to denote the set of all sequences $\bc\in\D^n$ such that every subsequence $\bz$ of length $m$ of $\bc$ contains at least $k$ symbols ${\bf A}$ while $\bz$ contains at most $(k-1)$ symbols ${\bf T}$. As shown earlier, $\C_n(m,k)$ is an $(n,\D;m)$ SSA code for all $m,k$. A natural question is, for a given number $m>0$, what is the value of $k$, where $1\le k\le m$, such that the code $\C_n(m,k)$ has the largest cardinality? We defer the study of $\C_n(m,k)$, including the code's cardinality and the design of efficient encoding algorithms to map arbitrary DNA sequences into such a code, to future research work. 
\end{remark}

\section{Constructions of $(n,\D;m)$ SSA Codes for $m\ge 3\log n+4$ with One Redundant Symbol} 

In this section, we show that when the stem length is sufficiently large, $m \ge 3\log n+4=\Theta(\log n)$, there exists an efficient encoding/decoding algorithm for $(n,\D;m)$ SSA codes with at most one redundant symbol. For simplicity, we assume that $\log_4 n$ is an integer, and define the {\em DNA-representation} of an integer as follows.  


\begin{definition}
For a positive integer $N$, the {\em DNA-representation} of $N$ is the replacement of symbols in the quaternary representation of $N$ over $\Sigma_4 = \{0,1,2,3\}$ by the following rule:  $0  \leftrightarrow {\bf A}, 1  \leftrightarrow {\bf T}, 2  \leftrightarrow {\bf C}, \text{ and } 3  \leftrightarrow {\bf G}.$
\end{definition} 

\begin{example}
If $N=100$, the quaternary representation of length 4 of $N$ is $1210$, hence, the DNA-representation of $N$ is ${\bf T}{\bf C}{\bf T}{\bf A}$. Similarly, when $N=55$, the quaternary representation of length 4 of $N$ is $0313$, thus the DNA-representation of $N$ is ${\bf A}{\bf G}{\bf T}{\bf G}$.
\end{example}

We now present explicit construction of the encoder $\enc_{{\rm SSA}}$ and the corresponding decoder $\dec_{\rm SSA}$. Our method is based on the sequence replacement technique. This method has been widely used in the literature \cite{TT:2020, TT:2022, srt:2019}. In addition, we also restrict the length of the repeated patterns of size 2 (also known as {\em pattern length limited (PLL) constraint}, as introduced in \cite{{wang:2021}}).
\vspace{0.05in}

\noindent{\bf Construction of $\enc_{{\rm SSA}}$}. Given $n>m>0$, $n> 16$, and $m\ge 3\log n+4$. Set $m'=1.5\log n+2$. The source DNA sequence $\bx \in \D^{n-1}$. The encoding algorithm includes three phases: {\em prepending phase}, {\em scanning and replacing phase}, and {\em extending phase}.
\vspace{0.05in}

\noindent{\em Prepending phase}. The source sequence $\bx\in\D^{n-1}$ is prepended with ${\bf A}$, to obtain $\bc={\bf A}\bx$ of length $n$. If $\bc$ is an $m$-SSA sequence, then the encoder outputs $\bc$. Otherwise, it proceeds to the next phase. 
\vspace{0.05in} 

\noindent{\em Scanning and replacing phase}. The encoder searches for the first pair of non-overlapping subsequences $\by, \bz$ of length $\ell_1$ of $\bc$, where $\ell_1\ge m'$, such that $\by={\rm RC}(\bz)$, or the first subsequence $\bu$ of $\bc$ of the form $\bu=(x_1x_2)^t$ whose length is $\ell_2=2t \ge m'=1.5\log n+2$, where $x_1,x_2\in  \D=\{{\bf A},{\bf T},{\bf C},{\bf G}\}$.

\begin{itemize}
\item If it finds a pair of non-overlapping subsequences $\by, \bz$, suppose that $\bc=\bX_1 \by \bX_2 \bz \bX_3$, where $\bX_1, \bX_2, \bX_3$ are subsequences of $\bc$, and $\by$ starts at index $i$, ends at index $j$ in $\bc$, where $j=i+\ell_1-1$, and $\bz$ starts at index $k$ in $\bc$. We have $i,j,k \le n-1$. 
\vspace{0.05in}

{\em Type-I Replacement}. The encoder sets a pointer ${\rm P_I}$, starting with symbol ${\bf T}$, and ${\rm P_I}={\bf T} \bp_1 \bp_2 \bp_3$, where $\bp_1, \bp_2, \bp_3$ are the DNA-representation of $i, j,$ and $k$, respectively. Since $\bp_1, \bp_2, \bp_3$ are of length $\log_4 n$, the pointer sequence ${\rm P}_{\rm I}$ is of length $1+3\log_4 n=1+1.5 \log n$. It then removes $\bz$ from $\bc$ and prepends ${\rm P_I}$ to $\bc$. The replacing step can be illustrated as follows. 
\begin{equation*}
 \bX_1 {\color{blue}{\by}} \bX_2 {\color{blue}{\bz}} \bX_3 \to \bX_1 {\color{blue}{\by}} \bX_2 \bX_3 \to {\color{red}{{\bf T} \bp_1 \bp_2 \bp_3}} \bX_1 {\color{blue}{\by}} \bX_2 \bX_3
\end{equation*}
Noted that the removed sequence $\bz$ is of length $\ell_1 \ge m' = 1.5 \log n+2$, while the insertion pointer ${\rm P}_I$ is of length $1.5 \log n+1$. Consequently, such a replacement reduces the length of the current sequence by at least one symbol. 

\item On the other hand, suppose that it finds a subsequence $\bu$ of $\bc$ of the form $\bu=(x_1x_2)^t$ whose length is $\ell_2=2t \ge m'$, where $x_1,x_2\in  \D=\{{\bf A},{\bf T},{\bf C},{\bf G}\}$. We further suppose that $\bc=\bU_1 (x_1x_2)^{t} \bU_2$, where $\bU_1, \bU_2$ are subsequences of $\bc$, and $\bu$ starts at index $i$, and ends at index $j$ in $\bc$, where $j=i+\ell_2-1$. We have $i,j \le n-1$.
\vspace{0.05in}

{\em Type-II Replacement}. Similarly, the encoder sets a pointer ${\rm P_{II}}$, starting with symbol ${\bf C}$, and ${\rm P_{II}}={\bf C} x_1 x_2 \bq_1 \bq_2$, where $\bq_1, \bq_2$ are the DNA-representation of $i$ and $j$, respectively. Since $\bq_1, \bq_2$ are of length $\log_4 n$, the pointer sequence ${\rm P_{II}}$ is of length $1+2+2\log_4 n=3+\log n$. It then removes $(x_1x_2)^{\ell_2}$ from $\bc$ and prepends ${\rm P_{II}}$ to $\bc$. The replacing step can be illustrated as follows. 
\begin{equation*}
 \bU_1 {\color{blue}{(x_1x_2)^{t}}} \bU_2 \to \bU_1 \bU_2  \to {\color{red}{{\bf C} x_1 x_2 \bq_1 \bq_2}} \bU_1 \bU_2.
\end{equation*}
Noted that the removed sequence is of length $\ell_2 \ge m' = 1.5 \log n+2$, while the insertion pointer ${\rm P_{II}}$ is of length $\log n+3$. Hence, such a replacement reduces the length of the current sequence by at least $(0.5\log n-1)$ symbols. Observe that $0.5\log n-1>1$ for $n> 16$.
\end{itemize}

\noindent The encoder repeats the scanning and replacing steps until the current sequence $\bc$ contains no pair of non-overlapping subsequences of length more than or equal to $m'$ such that one is the reverse-complement of the other, no subsequence $\bu$ of the form $\bu=(x_1x_2)^t$ whose length is $\ell_2=2t \ge m'$, or the current sequence is of length $m'-1$. Note that each replacement (either Type-I or Type-II) reduces the length of the current sequence by at least one symbol, and hence, this procedure is guaranteed to terminate. Here, we also note that the order of the scanning step is defined according to the starting index of the corresponding subsequences. In case the first subsequence $\by$ forming a secondary structure, is also the starting of such a subsequence $\bu$, the encoder proceeds to type-I replacement.  
\vspace{0.05in} 

\noindent{\bf Extending phase}. If the length of the current sequence $\bc$ is $N_0$ where $N_0<n$, the encoder appends a suffix of length $N_1=n-N_0$ to obtain a sequence of length $n$. Surprisingly, regardless the choice of the appending suffix, there is an efficient algorithm to decode the source DNA sequence uniquely (refer to the construction of $\dec_{{\rm SSA}}$). Here, we present one efficient method to generate a suitable suffix so that the output codeword remains $m$-SSA.  
\begin{itemize}
\item If $N_1$ is even, we append $\bs=({\bf A}{\bf C})^{N_1/2}$ to the end of $\bc$. 
\item If $N_1$ is odd, we append $\bs=({\bf A}{\bf C})^{(N_1-1)/2}{\bf A}$ to the end of $\bc$.
\end{itemize}

\begin{theorem}
The encoder $\enc_{{\rm SSA}}$ is correct. In other words, $\enc_{{\rm SSA}}(\bx)$ is an $m$-SSA sequence of length $n$ for all $\bx\in\D^{n-1}$. The redundancy of $\enc_{{\rm SSA}}$ is one redundant symbol. 
\end{theorem}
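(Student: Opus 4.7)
The plan is to establish three facts: (i) the scanning-and-replacing loop terminates with body length at most $n$; (ii) the padded output $\bc'\bs$ of length exactly $n$ satisfies the $m$-SSA property; and (iii) the encoder is injective, so the redundancy is exactly one symbol. Throughout the argument I use $m' = 1.5\log n + 2$ and the hypothesis $m \ge 3\log n + 4 = 2m'$.

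For (i), I would verify that each iteration strictly shortens $\bc$. A Type-I replacement removes a block of length $\ell_1 \ge m'$ and inserts a pointer of length $1 + 1.5\log n = m' - 1$, a net reduction of at least one symbol. A Type-II replacement removes at least $m'$ symbols and inserts a pointer of length $3 + \log n$, a reduction of at least $0.5\log n - 1$, which is strictly positive whenever $n > 16$. Termination follows immediately, and the extending phase then pads the body to length exactly $n$.

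For (ii), let $\bc'$ denote the body after scanning; by construction $\bc'$ satisfies two invariants: no pair of non-overlapping length-$\ge m'$ subsequences that are reverse-complements of each other, and no block of length $\ge m'$ of the form $(x_1 x_2)^t$. Let $\bs$ be the appended suffix, a substring of $({\bf A}{\bf C})^\infty$, so every contiguous subword of $\bs$ has period $2$ and lies in $\{{\bf A},{\bf C}\}^*$. Assuming toward contradiction that $\bc'\bs$ contains non-overlapping subsequences $\by,\bz$ of length $\ge m$ with $\by = {\rm RC}(\bz)$, I would case-split on the positions of $\by,\bz$ relative to the $\bc'|\bs$ boundary: (a) both in $\bc'$, eliminated by the first invariant since $m\ge m'$; (b) both in $\bs$, impossible because ${\rm RC}$ of any subword of $\bs$ lies in $\{{\bf T},{\bf G}\}^*$; (c) one entirely in $\bc'$ and one entirely in $\bs$, which forces a period-$2$ pattern of length $\ge m$ inside $\bc'$, contradicting the second invariant; (d) one spans the boundary with the partner entirely in $\bc'$, which I would handle by writing the spanning word as $\by_1\by_2$ with $\by_1$ a suffix of $\bc'$ and $\by_2$ a prefix of $\bs$ and exploiting $|\by_1|+|\by_2|\ge 2m'$ to conclude that either $|\by_2|\ge m'$ (so ${\rm RC}(\by_2)$ sits inside $\bc'$ as a period-$2$ block of length $\ge m'$) or $|\by_1|\ge m'$ (so $\by_1$ and ${\rm RC}(\by_1)$ are non-overlapping length-$\ge m'$ reverse-complement subsequences of $\bc'$); (e) one spans and the partner is entirely in $\bs$, impossible by alphabet mismatch; (f) both span, impossible because two boundary-spanning words would overlap inside $\bs$. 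This boundary case analysis is the main obstacle, and it essentially uses both invariants together with $m \ge 2m'$; omitting the period-$2$ invariant would leave cases (c) and part of (d) untreated.

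For (iii), the leading symbol of any output is ${\bf A}$ exactly when no replacement was performed, since every prepended pointer begins with ${\bf T}$ or ${\bf C}$. A decoder therefore proceeds by iterating: parse the leading pointer if it starts with ${\bf T}$ or ${\bf C}$, recover the stored indices, reconstruct the removed block (as ${\rm RC}(\by)$ for Type-I or $(x_1 x_2)^t$ for Type-II), and reinsert it; once the leading symbol becomes ${\bf A}$, strip it and read the next $n-1$ symbols as $\bx$. Each step is deterministic and invertible, so $\enc_{\rm SSA}$ is injective from $\D^{n-1}$ to $\D^n$ and the redundancy is exactly one symbol.
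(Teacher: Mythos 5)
Your proposal is correct and follows essentially the same route as the paper: decompose any candidate reverse-complement pair across the $\bc_1\,|\,\bs$ boundary, invoke the $m'$-SSA invariant when the part inside $\bc_1$ has length at least $m'$ and the period-$2$ (PLL) invariant when the part inside $\bs$ does, using $m\ge 2m'$ and the fact that $\bs\in\{{\bf A},{\bf C}\}^*$ while its reverse-complement lies in $\{{\bf T},{\bf G}\}^*$. Your write-up is in fact more complete than the paper's, which only tracks the decomposition of $\by$ and is terse about where $\bz$ sits, and which relegates termination and decodability to the surrounding text rather than the proof itself.
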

\begin{proof}
Suppose that $\bc=\enc_{{\rm SSA}}(\bx)\in \D^n$, and $\bc=\bc_1 \bs$, where $\bc_1$ is $m'$-SSA and the length of the repeated patterns of size 2 in $\bc_1$ is of length at most $m'=1.5\log n+2$, and $\bs$ is the generated suffix of $\bc_1$ at the extending phase. Consider an arbitrary sequence $\by$ of length $\ell \ge 3\log n+4$. Suppose that $\by=\by_1\by_2$, where $\by_1$ is a subsequence of $\bc_1$ while $\by_2$ is a subsequence of $\bs$. We have the following cases. 
\begin{itemize}
\item If $\by_1$ is of length less than $m'$ (particularly including the case $\by_1\equiv \varnothing$), hence the length of $\by_2$ is more than $m'$. Clearly, there is no subsequence $\bz$ in $\bc_1 \bs$ that $\by={\rm RC}(\bz)$, as the length of the repeated patterns of size 2 in $\bc_1$ is of length at most $m'$. 
\item If $\by_1$ is of length more than or equal to $m'$, we also conclude that there is no subsequence $\bz$ in $\bc=\bc_1 \bz$ that $\by={\rm RC}(\bz)$ since $\bc_1$ is $m'$-SSA. \qedhere
\end{itemize}
\end{proof}
We now present the corresponding decoding algorithm. 
\vspace{0.05in} 

\noindent{\bf Construction of $\dec_{{\rm SSA}}$}. From a DNA sequence $\bc$ of length $n$, the decoder scans from left to right. If the first symbol is ${\bf A}$, the decoder simply removes ${\bf A}$ and identifies the last $(n-1)$ symbols as the source sequence. On the other hand, 
\begin{itemize} 
\item if it starts with ${\bf T}$, the decoder takes the prefix of length $(1+1.5\log n)$ and concludes that this prefix is a pointer prepended after a type-I replacement. In other words, the pointer is of the form ${\bf T}\bp_1\bp_2\bp_3$, where $\bp_1,\bp_2,\bp_3$, each is of length $\log_4 n=0.5\log n$. The decoder sets $i, j, k$ to be the positive integers whose DNA-representations are $\bp_1,\bp_2,\bp_3$, respectively and sets $\by$ to be the subsequence containing the symbols from index $i$ to index $j$. It removes the pointer, adds $\bz \equiv {\rm RC}(\by)$ to $\bc$ at index $k$. 
\item if it starts with ${\bf C}$, the decoder takes the prefix of length $(3+\log n)$ and concludes that this prefix is a pointer prepended after a type-II replacement. In other words, the pointer is of the form ${\bf C}x_1x_2\bq_1\bq_2$, where $\bq_1,\bq_2$, each is of length $\log_4 n=0.5\log n$. The decoder sets $i, j$ to be the positive integers whose DNA-representations are $\bq_1,\bq_2$, respectively. It then removes the pointer, adds $\bz \equiv (x_1x_2)^{(j-i+1)/2}$ to $\bc$ at index $i$. 
\end{itemize}
The decoding procedure terminates when the first symbol is {\bf A}, and takes the following $(n-1)$ symbols as the user data.
\vspace{0.05in} 

\noindent{\bf Complexity analysis}. For codeword of length $n$, the time complexity of the encoder (and the corresponding decoder) is linear in $n$, which follows from: the number of replacing operations is at most $n-m$, which is $\Theta(n)$, and the complexity of the each replacing operation (including the prepending prefix step or converting quaternary representation to DNA-representation of an integer) is constant time $\Theta(1)$. 

\section{Conclusion}

We have presented efficient algorithms to construct DNA codes that avoid secondary structure of arbitrary stem length. Particularly, when $m\ge 3 \log n + 4$, we have provided an efficient encoder that incurs only one redundant symbol, and when $m = 3$, our constructions yield a family of DNA codes of rate $1.3031$ bits/nt, that improve the previous highest code rate in the literature. 

\label{sec:biblio}

\end{document}